\documentclass{article}
\usepackage{fullpage}

\usepackage{graphicx}
\usepackage{algorithm}
\usepackage{algorithmicx}
\usepackage[noend]{algpseudocode}
\usepackage{amsmath,amssymb,amsthm,mathtools}
\usepackage{paralist}
\usepackage{bm}
\usepackage{xspace}
\usepackage{url}
\usepackage{fullpage, prettyref}
\usepackage{boxedminipage}
\usepackage{wrapfig}
\usepackage{ifthen}
\usepackage{color}
\usepackage{xcolor}
\usepackage{framed}
\usepackage[pagebackref,colorlinks=true,pdfpagemode=none,urlcolor=blue,linkcolor=blue,citecolor=violet,pdfstartview=FitH]{hyperref}
\usepackage{fullpage}
\usepackage{fullpage}

\usepackage{thmtools}
\usepackage{thm-restate}

\algnewcommand{\LineComment}[1]{\State \(\triangleright\) \emph{\color{violet} \small #1}}
\newtheorem{theorem}{Theorem}[section]

\newtheorem{lemma}[theorem]{Lemma}
\newtheorem{claim}[theorem]{Claim}

\newtheorem{definition}[theorem]{Definition}

\newcommand{\ignore}[1]{}

\newcommand{\eps}{\varepsilon}

\newcommand{\poly}{\mathrm{poly}}

\newcommand{\floor}[1]{\lfloor#1\rfloor}

\newcommand{\EX}{\hbox{\bf E}}
\newcommand{\Var}{\hbox{\bf Var}}

\newcommand{\otilde}{\widetilde{O}}

\newcommand{\eqdef}{:=}

\newcommand{\ers}{{\tt ERS}}
\newcommand{\ersg}{{\tt ERS-gen}}

\newcommand{\Sec}[1]{\hyperref[sec:#1]{\S\ref*{sec:#1}}} 
\newcommand{\Eqn}[1]{\hyperref[eq:#1]{(\ref*{eq:#1})}} 
\newcommand{\Fig}[1]{\hyperref[fig:#1]{Fig.\,\ref*{fig:#1}}} 
\newcommand{\Tab}[1]{\hyperref[tab:#1]{Tab.\,\ref*{tab:#1}}} 
\newcommand{\Thm}[1]{\hyperref[thm:#1]{Theorem\,\ref*{thm:#1}}} 
\newcommand{\Fact}[1]{\hyperref[fact:#1]{Fact\,\ref*{fact:#1}}} 
\newcommand{\Lem}[1]{\hyperref[lem:#1]{Lemma\,\ref*{lem:#1}}} 
\newcommand{\Prop}[1]{\hyperref[prop:#1]{Prop.~\ref*{prop:#1}}} 
\newcommand{\Cor}[1]{\hyperref[cor:#1]{Corollary~\ref*{cor:#1}}} 
\newcommand{\Conj}[1]{\hyperref[conj:#1]{Conjecture~\ref*{conj:#1}}} 
\newcommand{\Def}[1]{\hyperref[def:#1]{Definition~\ref*{def:#1}}} 
\newcommand{\Alg}[1]{\hyperref[alg:#1]{Alg.~\ref*{alg:#1}}} 
\newcommand{\Ex}[1]{\hyperref[ex:#1]{Ex.~\ref*{ex:#1}}} 
\newcommand{\Clm}[1]{\hyperref[clm:#1]{Claim~\ref*{clm:#1}}} 
\newcommand{\Step}[1]{\hyperref[step:#1]{Step~\ref*{step:#1}}} 
\newcommand{\Obs}[1]{\hyperref[obs:#1]{Obs~\ref*{obs:#1}}} 

\begin{document}

\title{A note on approximating the average degree of bounded arboricity graphs}
\date{}

\author{Talya Eden\\
Bar-Ilan University\\
{\tt talyaa01@gmail.com}
\and
C. Seshadhri \\
University of California, Santa Cruz\\
{\tt sesh@ucsc.edu}
}

\maketitle

\begin{abstract}
Estimating the average degree of graph is a classic problem
in sublinear graph algorithm. 
Eden, Ron, and Seshadhri (ICALP 2017, SIDMA 2019) gave a simple algorithm
for this problem whose running time depended on the graph arboricity, 
but the underlying simplicity and associated analysis were buried inside the main result.
Moreover, the description there loses logarithmic factors because of parameter search.
The aim of this note is to give a full presentation of this algorithm, without these losses.

Consider standard access (vertex samples, degree queries, and neighbor queries) to a graph
$G = (V,E)$ of arboricity at most $\alpha$. Let $d$ denote the average degree of $G$.
We describe an algorithm that gives a $(1+\eps)$-approximation to $d$
degree using $O(\eps^{-2}\alpha/d)$ queries. For completeness, we modify the algorithm to get a $O(\eps^{-2} \sqrt{n/d})$ query
algorithm for general graphs. 
\end{abstract}

\section{Introduction} \label{sec:intro}

Given a simple graph $G = (V,E)$ with $n$ vertices and $m$ edges,
the average degree is denoted $d \eqdef 2m/n$. The problem of estimating
$d$ in sublinear time goes back to the early days of sublinear graph algorithms.
Feige initiated study on this problem~\cite{Feig06}, and Goldreich-Ron~\cite{GR08} (henceforth GR)
obtained the first sublinear $(1+\eps)$-approximation algorithm for the average degree.
The query complexity (and running time) were $\poly(\eps^{-1}\log n) \sqrt{n/d}$,
and GR gave a nearly matching $\Omega(\sqrt{n/d})$ lower bound. The GR algorithm
is relatively complex, and loses various logarithm and $\eps$ factors due to bucketing
techniques.

Eden-Ron-Seshadhri~\cite{ERS19} (henceforth ERS) gave a significantly simpler algorithm that removes
the bucketing. This algorithm has an easier analysis, and additionally, introduced
a connection to the graph arboricity. ERS proved that the sample complexity of their
algorithm is $\otilde(\eps^{-2}\alpha/d)$, where $\alpha$ is a bound on the graph
arboricity. It is known that $\alpha \leq \sqrt{nd}$, so that yields the $\otilde(\eps^{-2} \sqrt{n/d})$
query complexity bound for general graphs. Unfortunately, the simple algorithm
and analysis are buried deep inside the paper (Section 5.1 of~\cite{ERS19} has the details).
Chapter 10.3.2.2 of Goldreich's book~\cite{G17-book}, Section 3 of Ron's survey~\cite{Ro19-survey},
and Chakrabarty's lecture notes~\cite{Ch21} give nice expositions of the ERS algorithm,
albeit without the arboricity connection and the additional overhead of parameter search.

The aim of this note to give a full presentation of the ERS algorithm with the arboricity
dependence, and explain all the technicalities of the local search. We do not
incur any logarithmic factors and get a simple, streamlined algorithm.

We do not give a comprehensive description of previous work, but it is worth mentioning
a number of results on edge counting and arboricity. Eden-Rosenbaum~\cite{EdRo18} study the problem
of sampling uniform random edges, and Eden-Ron-Rosenbaum~\cite{EdRoRo19} prove the arboricity captures
the complexity of this problems. T{\v{e}}tek-Thorup~\cite{TT22}, Beretta-T{\v{e}}tek~\cite{BT24}, Beretta-Chakrabarty-Seshadhri~\cite{BCS26},
and Chanda~\cite{Ch26} all study the problem of average degree/edge estimation in variants of the standard adjacency list model.
We note that Chanda's algorithms also get the $\alpha/d$ query complexity, albeit with extra $\poly(\eps^{-1}\log n)$ factors.

\subsection{Setup} \label{sec:setup}

We consider the adjacency list query model (refer to Chap. 10 of~\cite{G17-book}).
The graph $G$ is accessed via the following queries.
(1) Vertex query: get a uar $v \in V$, (2) Degree query: for any vertex $v\in V$, returns $v$'s degree $d_v$; and
(3) Neighbor query: for any vertex $v\in V$, return a uar neighbor of $v$. 
The standard model often includes pair queries, but we do not need this query in our algorithms. 

A subtlety pointed out by Beretta-T{\v{e}}tek~\cite{BT24} and subsequently expanded upon by Beretta-Chakrabarty-Seshadhri~\cite{BCS26}
is that the knowledge of $n$ can make a difference. In our setting, we will assume that $n$ is \emph{not} known.

We introduce the concept of graph \emph{arboricity} (Pg. 52 of~\cite{Di-book}).

\begin{definition}[Arboricity] \label{def:arb}
     The arboricity of a graph $G$, denote $\alpha(G)$, is the minimum number of forests required to cover the edge set of $G$. 
 \end{definition}    

The arboricity has played an important role in subgraph counting algorithms, since the seminal work of
Chiba-Nishizeki~\cite{ChNi85}. A closely related quantity called the graph \emph{degeneracy} is 
a $2$-approximation to the graph arboricity. The class of bounded degeneracy graphs is rich and contains
all minor-closed families. Refer to the short survey of Seshadhri~\cite{Se23} for more details.

We will use this fundamental lemma of Chiba-Nishizeki, which we prove for completeness.

\begin{lemma} [Chiba-Nishizeki, Lemma 2 of~\cite{ChNi85}] \label{lem:cn} $\sum_{(u,v) \in E} \min(d_u, d_v) \leq 2m\alpha(G)$.
\end{lemma}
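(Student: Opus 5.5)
The plan is to use the forest decomposition directly. By \Def{arb}, the edge set $E$ can be partitioned into $\alpha = \alpha(G)$ forests $F_1, \ldots, F_\alpha$. In each forest $F_i$, every connected component is a tree, and we orient each edge toward a chosen root. Then every vertex has out-degree at most $1$ within $F_i$. Taking all forests together yields an orientation of $G$ in which every vertex $v$ has out-degree at most $\alpha$. For each edge $e = (u,v)$, let $t(e)$ denote its tail in this orientation, so that $e$ is directed from $t(e)$ to the other endpoint.

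The key step is to bound each edge's contribution by the degree of its tail: $\min(d_u, d_v) \leq d_{t(e)}$. Summing over edges and regrouping by tail gives
\begin{align*}
\sum_{(u,v)\in E} \min(d_u,d_v) \;\leq\; \sum_{e \in E} d_{t(e)} \;=\; \sum_{v \in V} \mathrm{outdeg}(v)\, d_v \;\leq\; \alpha \sum_{v\in V} d_v \;=\; 2m\alpha .
\end{align*}
Here the middle equality holds because each edge has exactly one tail. The final equality is the handshake lemma, $\sum_v d_v = 2m$.

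There is no real obstacle in this argument. The only point needing care is justifying that every vertex has out-degree at most $1$ per forest. This follows because in a rooted tree each non-root vertex has exactly one parent edge and the root has none.
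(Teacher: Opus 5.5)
Your proposal is correct and matches the paper's proof: both decompose $E$ into $\alpha(G)$ forests, orient each forest's edges from child to parent so every vertex is the tail of at most $\alpha(G)$ edges, bound $\min(d_u,d_v)$ by the tail's degree, and regroup the sum by tail to get $\alpha(G)\sum_v d_v = 2m\alpha(G)$. The only nuance, which the paper also passes over, is that \Def{arb} speaks of a cover rather than a partition, but deleting repeated edges turns any cover into a partition since subgraphs of forests are forests.
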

\begin{proof} We partition the edges of $G$ into at most $\alpha(G)$ forests. In each forest, we 
direct edges from children to their parents. This creates a digraph $\vec{G}$. 
For every directed edge $e$ of $\vec{G}$, let $s(e)$ denote
the source vertex of $e$.
Since each vertex of a forest has at most one parent,
each vertex in $\vec{G}$ is the source for at most $\alpha(G)$ edges. 
\begin{align}
\sum_{(u,v) \in E} \min(d_u, d_v) \leq \sum_{e \in E} d_{s(e)} = \sum_{v \in V} \sum_{e: s(e) = v} d_v
= \sum_{v \in V} d_v \sum_{e: s(e) = v} 1 \leq \alpha(G) \sum_{v \in V} d_v = 2m\alpha(G)
\end{align}
\end{proof}

A central construct in subgraph counting algorithms using the arboricity bound is
the \emph{degree orientation}.  We define an ordering on the vertices; tie-breaking
is done using the vertex IDs.

\begin{definition}[The degree ordering] \label{def:prec}
Let $u\prec v$ if $d_u < d_v$    or if $d_u =d_v$ and $id(u)< id(v)$.

Let $G_\prec$ be the DAG obtained by directing edges according to $\prec$, and let $d^+_u$
be the outdegree of $u$ in $G_\prec$.
\end{definition}

This orientation can be used to prove a simple, yet fundamental bound on the graph arboricity.

\begin{lemma} \label{lem:arb} $\alpha(G) \leq \sqrt{2m}$.
\end{lemma}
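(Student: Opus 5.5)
The plan is to use the degree orientation $G_\prec$ of \Def{prec} to bound every out-degree by $\sqrt{2m}$. Then I convert the orientation into a forest decomposition: each vertex with out-degree at most $k$ yields at most $k$ forests, so $\alpha(G)\leq \max_u d^+_u$. Note that the proof of \Lem{cn} uses the reverse direction (forests give a bounded out-degree orientation). Here I need the converse, but only for an acyclic orientation, where it is easy.

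\textbf{Step 1: bounding out-degrees.} Fix $u$, and let $v$ be any out-neighbor of $u$ in $G_\prec$. Since $u\prec v$, we have $d_v\geq d_u$. Also $d_u\geq d^+_u$. So $u$ has $d^+_u$ out-neighbors, each of degree at least $d^+_u$. This gives
\[
(d^+_u)^2 \leq \sum_{v:\,u\prec v,\,(u,v)\in E} d_v \leq \sum_{v\in V} d_v = 2m ,
\]
and hence $d^+_u\leq\sqrt{2m}$. The sum is over distinct vertices, because the graph is simple.

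\textbf{Step 2: from an acyclic orientation to forests.} Let $k=\max_u d^+_u$. For each $u$, label its out-edges with distinct colors from $\{1,\dots,k\}$. Each color class is a set of edges in which every vertex has out-degree at most $1$. Suppose such a class contained a cycle, viewed as an undirected cycle with $\ell$ edges on $\ell$ vertices. Every edge has exactly one source, and each vertex is the source of at most one edge, so every vertex of the cycle is the source of exactly one cycle edge. This forces the cycle to be consistently directed, which contradicts the acyclicity of $G_\prec$. So each color class is a forest, and therefore $\alpha(G)\leq k\leq\sqrt{2m}$.

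The main subtlety is Step 2, namely checking that an out-degree-$\leq 1$ subgraph of a DAG is acyclic as an undirected graph. The counting argument above handles it. One could also avoid this step by using Nash-Williams' formula, but the direct coloring is self-contained. A floor can be taken since $\alpha$ is an integer, but it is not needed.
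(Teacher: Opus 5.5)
Your proof is correct and follows the paper's argument essentially verbatim. You bound every out-degree of $G_\prec$ by $\sqrt{2m}$ via $(d^+_u)^2 \le \sum_{v \in \Gamma^+_u} d_v \le 2m$, then give each vertex's out-edges distinct colors, so each color class is an out-degree-at-most-one subgraph of a DAG and hence a forest; your pigeonhole argument ruling out undirected cycles spells out the step the paper states in one line.
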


\begin{proof} Consider the oriented $G_\prec$, as given in \Def{prec}. Let $u$ be 
a vertex of maximum outdegree, and let $\Gamma^+_u$ denote the outneighborhood of $u$.
Observe that 
$\sum_{v \in \Gamma^+_u} d_v \geq \sum_{v \in \Gamma^+_u} d_u \geq \sum_{v \in \Gamma^+_u} d^+_u = (d^+_u)^2$.
Since the sum of degrees is at most $2m$, we conclude that $d^+_u \leq \sqrt{2m}$.

To partition $G$ into at most $\sqrt{2m}$ forests, consider the following labeling of the edges in $G_\prec$.
For every $v$ in $G_{\prec}$, all outgoing edges are labeled with 
a distinct label in $[1,\floor{\sqrt{2m}}]$. For every $i$, the edge set labeled $i$ is a forest (since for every vertex, at most one outgoing edge is chosen
and $G_\prec$ is a DAG).
Thus, the arboricity is at most $\sqrt{2m}$.
\end{proof}

We state the main theorem about estimating the average degree.

\begin{theorem} \label{thm:main} The algorithm \ers{} takes as input adjacency list
access to the graph $G$ (without knowledge of $n$), an upper bound $\alpha$ on the arboricity, and an approximation
parameter $\eps$. With probability $> 2/3$, the query complexity (and running time) is $O(\eps^{-2} \alpha/d)$ 
and the output lies in the range $(1\pm\eps) d$.
\end{theorem}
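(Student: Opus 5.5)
We will analyze the ERS estimator built on the degree orientation of \Def{prec}. The basic sample is as follows. Pick a uniform vertex $u$, pick a uniform neighbor $v$ of $u$ (if $d_u>0$), and set $X = d_u$ if $u\prec v$, and $X=0$ otherwise. Then
\[
\EX[X]=\frac{1}{n}\sum_u d_u\cdot\frac{d^+_u}{d_u}=\frac{1}{n}\sum_u d^+_u=\frac{m}{n}=\frac{d}{2}.
\]
For the second moment, $\EX[X^2]=\frac1n\sum_u d_u d^+_u$. Each edge is oriented from its lower-degree endpoint, so $\sum_u d_u d^+_u=\sum_{(u,v)\in E}\min(d_u,d_v)$. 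By \Lem{cn} this is at most $2m\alpha$. Hence $\EX[X^2]\le 2m\alpha/n = d\alpha$, and $\Var[X]/\EX[X]^2\le 4\alpha/d$. Averaging $k=\Theta(\eps^{-2}\alpha/d)$ independent copies and applying Chebyshev gives a $(1\pm\eps)$-approximation of $d/2$ with constant probability. Each copy costs $O(1)$ queries, so the query bound is $O(\eps^{-2}\alpha/d)$.

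The main obstacle is that $d$ is unknown, and $n$ is unknown too, so $k$ cannot be fixed in advance. Guessing $d$ by halving would cost a $\log$ factor in the success probability or in the union bound, which the theorem forbids. I would avoid parameter search by using a stopping-rule, or ``sum until threshold'', estimator. Draw i.i.d.\ copies $X_1,X_2,\dots$ and stop at the first $T$ with $\sum_{i\le T} X_i\ge \tau \eqdef c\,\eps^{-2}\alpha$. Note that $X_i\le d^+_u\cdot(d_u/d^+_u)$ is not bounded by $\alpha$ directly, so it is cleaner to cap: use $Y_i=\min(X_i,\ldots)$, or better, rescale by setting $Z_i = X_i/\alpha$. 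The output is $2\tau/T$. Heuristically $T\approx 2\tau/d=\Theta(\eps^{-2}\alpha/d)$, which gives the claimed complexity automatically.

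For correctness I would use a deterministic-threshold argument. Let $t_\pm = (1\mp\eps)^{-1}\cdot 2\tau/d$. The output lies outside $(1\pm\eps)d$ essentially only if $S_{t_-}\ge\tau$ or $S_{t_+}<\tau$, where $S_t=\sum_{i\le t}X_i$. Each of these events is a deviation of $S_t$ from its mean $td/2$ by a relative $\Theta(\eps)$. Chebyshev with $\Var[S_t]\le t d\alpha$ bounds each probability by $O(\alpha/(\eps^2 t d)) = O(1/c)$, which is small for a large constant $c$. The query bound follows because $T\le t_+$ on the good event.

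The delicate points are the following. First, the Chebyshev bound must handle the fixed-time sums $S_{t_\pm}$, not the random stopping time; monotonicity of $S_t$ in $t$ makes this reduction valid. Second, the case $d=0$, or isolated vertices with $d_u=0$, needs care; here $X=0$ and the algorithm may not terminate, so I would accept ``with probability $>2/3$'' over the bounded-query regime, or assume $m>0$. Finally, I expect to verify that the errors combine to the stated $1/3$ failure probability by choosing $c$ appropriately.
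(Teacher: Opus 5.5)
\textbf{A gap relative to the statement: you analyze a different algorithm.} Your argument is sound for the procedure you define, but that procedure is not \ers{}, and the theorem is a statement about \ers{}. \ers{} draws fresh batches of $s$ samples of $X_i=2d_u\cdot\mathbf{1}[u\prec v]$ and outputs the batch mean if it exceeds $\tau$. Otherwise it sets $s\gets 2s$ and $\tau\gets\tau/2$ (\Step{reset}), starting from $s=c/\eps^2$, $\tau=\alpha$. You never analyze this loop.

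Your reason for discarding it is also mistaken: halving does not cost a logarithmic factor. While $\tau>8d$, the only danger is stopping early, and Markov alone gives $\Pr[X>\tau]\le d/\tau$. The thresholds above $8d$ form a geometric sequence whose smallest element exceeds $8d$. So the total early-stopping probability is below $\frac{1}{8}\sum_{i\ge 0}2^{-i}=1/4$ (\Clm{markov}), with no per-round Chebyshev budget at all. Once $\tau\le 8d$, the invariant $s\tau=c\alpha/\eps^2$ forces $s\ge c\alpha/(8\eps^2 d)$. Chebyshev with \Clm{X} then makes each such round accurate with probability $\ge 99/100$. After $O(1)$ further halvings, $\tau\le d/2<(1-\eps)d$, so an accurate round must stop. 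The union bound is therefore over a constant number of rounds, and the cost is dominated by the last round, $s=c\alpha/(\eps^2\tau)=O(\eps^{-2}\alpha/d)$. This Markov-plus-geometric-thresholds argument is the idea your proposal is missing if the goal is the theorem as stated.

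\textbf{Your route is correct as a different proof.} As a proof that \emph{some} algorithm, not knowing $n$, meets the same guarantee, your argument works and is genuinely different. Your estimator and second-moment bound are \Clm{X} up to the factor $2$. Because $X_i\ge 0$, $S_t$ is nondecreasing, so $\{T<t_-\}\subseteq\{S_{\lfloor t_-\rfloor}\ge\tau\}$ and $\{T>t_+\}\subseteq\{S_{\lfloor t_+\rfloor}<\tau\}$. Chebyshev at these two fixed times gives failure probability about $2(1+\eps)/c+2(1-\eps)/c=4/c$. Rounding is harmless, since $\alpha\ge\alpha(G)\ge m/(n-1)>d/2$ makes $t_\pm=\Theta(c\eps^{-2}\alpha/d)$ large. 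Drop the capping/rescaling remark: with output $2\tau/T$ the overshoot $S_T-\tau$ never matters, nonnegativity is all you use, and a cap would bias the mean.

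The two algorithms are close cousins. Since $s\tau=c\alpha/\eps^2$, the test $X>\tau$ in \ers{} says that the \emph{sum} of a fresh batch exceeds $c\alpha/\eps^2$. That is essentially your threshold (up to the factor $2$ in $X_i$), which you impose on a cumulative sum instead. Reusing samples cumulatively is what lets monotonicity replace the Markov step and the union bound over rounds, giving a one-shot analysis. The paper's fresh batches instead keep the output a plain sample mean with an off-the-shelf Chebyshev per round. The same guess-and-verify template also yields \ersg{} by changing only the $\tau$ update.
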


\section{The ERS algorithm} \label{sec:ers}

\begin{algorithm}
    \caption{{\ers($G,\alpha,\eps$)}}\label{alg:ers}
    \begin{algorithmic}[1]
        \LineComment{Adj. list access to $G$, $n$ is unknown}
        \LineComment{$\alpha$ upper bound to $\alpha(G)$, $\eps < 1/2$, $c$ is large constant}
    \Statex
	\State Initialize $s = c/\eps^2$ and $\tau = \alpha$.
	\While{true}:
		\For{$i$ in $1,\ldots,s$}
        \State Pick uar vertex $u$, and pick uar neighbor $v$ of $u$.
        \State Query their degrees $d_u$ and $d_v$.
        \State If $u \prec v$, set $X_i = 2d_u$.
        \State Else, set $X_i = 0$.
	    \EndFor
    \State If $X = s^{-1}\sum_{i \leq s} X_i > \tau$, output $X$.
    \State Else, reset $s = 2s$ and $\tau = \tau/2$. \label{step:reset}
	\EndWhile
	\end{algorithmic}
\end{algorithm}

The key idea is in the following bounds.

\begin{claim} \label{clm:X} $\EX[X_i] = d$ and $\Var[X_i] \leq 8d\alpha(G)$.
\end{claim}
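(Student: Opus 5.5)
We compute both moments by writing $X_i$ as a sum over ordered pairs $(u,v)$ with $v$ a neighbor of $u$. The pair $(u,v)$ is sampled with probability $\frac{1}{n}\cdot\frac{1}{d_u}$. Since $X_i = 2d_u\cdot\mathbf{1}[u\prec v]$, we get
\begin{align*}
\EX[X_i] = \sum_{u\in V}\sum_{v\in\Gamma_u}\frac{1}{n d_u}\, 2d_u\,\mathbf{1}[u\prec v] = \frac{2}{n}\sum_{u\in V} d^+_u .
\end{align*}
Here $d^+_u$ is the outdegree in $G_\prec$ from \Def{prec}. Every edge $\{u,v\}$ is oriented in exactly one direction, because $\prec$ is a strict total order thanks to the ID tie-breaking. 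Hence $\sum_u d^+_u = m$, and $\EX[X_i] = 2m/n = d$. Isolated vertices need no special handling: they have $d_u=0$ and contribute $X_i=0$, or we simply read the sampling step as producing $0$ in that case.

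For the variance we bound $\Var[X_i]\le \EX[X_i^2]$. The same expansion gives
\begin{align*}
\EX[X_i^2] = \sum_{u}\sum_{v\in\Gamma_u}\frac{1}{n d_u}\,4d_u^2\,\mathbf{1}[u\prec v] = \frac{4}{n}\sum_{(u\to v)\in G_\prec} d_u .
\end{align*}
The key observation is that for each edge directed $u\to v$ we have $u\prec v$, so $d_u\le d_v$ and therefore $d_u=\min(d_u,d_v)$. The sum thus equals $\sum_{(u,v)\in E}\min(d_u,d_v)$, which \Lem{cn} bounds by $2m\alpha(G)$. This gives $\EX[X_i^2]\le \frac{4}{n}\cdot 2m\alpha(G) = 4d\,\alpha(G)$, which is below the claimed $8d\alpha(G)$.

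The only delicate point is making sure the double sum over ordered neighbor pairs matches the edge sum in \Lem{cn}. Each undirected edge appears as the ordered pair $(u,v)$ with $u\prec v$ exactly once, so no factor of $2$ is lost. Even if \Lem{cn} is read as summing over ordered pairs, the bound only changes by a factor of $2$, which still fits within the constant $8$ in the claim. Everything else is routine algebra.
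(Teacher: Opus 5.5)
Your proof is correct and follows the paper's argument. You compute $\EX[X_i]=\frac{2}{n}\sum_u d^+_u = d$, bound $\Var[X_i]\le\EX[X_i^2]=\frac{4}{n}\sum_u d^+_u d_u$, and recognize this sum as $\sum_{(u,v)\in E}\min(d_u,d_v)$ so that \Lem{cn} applies. Your constant $4d\alpha(G)$ is what the computation actually gives: the paper's final step $(4/n)\cdot 2m\alpha(G) = 8d\alpha(G)$ loses a factor of $2$ in the harmless direction, so the stated bound holds with room to spare.
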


\begin{proof} The probability of picking vertex $u$ is $1/n$, and the probability
    of picking a neighbor $v$ such that $v \succ u$ is exactly $d^+_u/d_u$. Suppose $u$ is picked.
    If the latter event occurs, $X_i = 2d_u$, otherwise it is zero. With these in hand,
    \begin{align}
        \EX[X_i] = \sum_{u \in V} \frac{1}{n} \cdot \frac{d^+_u}{d_u} \cdot 2d_u = \frac{1}{n} \sum_{u \in V} 2d^+_u = 2m/n = d
    \end{align}
    We now bound $\Var[X_i] = \EX[X^2_i] - (\EX[X_i]^2) \leq \EX[X^2_i]$.     \begin{align}
        \EX[X^2_i] = \sum_{u \in V} \frac{1}{n} \cdot\frac{d^+_u}{d_u} \cdot 4d^2_u = \frac{4}{n} \sum_{u \in V} d^+_u d_u
    \end{align}
    The Chiba-Nishizeki bound of \Lem{cn} enters the picture. Observe that $$\sum_{u \in V} d^+_u d_u = \sum_{u \in V} \sum_{v \succ u} d_u    =\sum_{(u,v) \in E} \min(d_u, d_v)\leq 2m \alpha(G)$$ 
    Hence, $\EX[X^2_i] \leq (4/n) \cdot (2m\alpha(G)) = 8d\alpha(G)$.
\end{proof}

A Markov argument shows that the procedure cannot terminate too early.

\begin{claim} \label{clm:markov} The probability that \ers$(G,\alpha,\eps)$ terminates
    when $\tau > 8d$ is at most $1/4$.
\end{claim}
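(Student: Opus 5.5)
The plan is a union bound over the iterations of the while loop in which $\tau > 8d$, with Markov's inequality controlling each one. Index the iterations by $j = 0,1,2,\ldots$. In iteration $j$ the threshold is $\tau_j = \alpha/2^j$ and the sample size is $s_j = 2^j c/\eps^2$. The estimate is $X^{(j)} = s_j^{-1}\sum_{i\le s_j} X_i$, where the samples $X_i$ are fresh in each iteration. Each $X_i \geq 0$, and by \Clm{X} we have $\EX[X_i] = d$, so linearity gives $\EX[X^{(j)}] = d$. Markov's inequality then yields
\[
\Pr[X^{(j)} > \tau_j] \le d/\tau_j .
\]
The algorithm terminates at iteration $j$ only if $X^{(j)} > \tau_j$. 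So the event ``the algorithm terminates while $\tau > 8d$'' is contained in the union, over all $j$ with $\tau_j > 8d$, of the events $\{X^{(j)} > \tau_j\}$. This holds whether or not the earlier iterations were actually reached.

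Next we sum the Markov bounds. Let $J$ be the largest index with $\tau_J > 8d$; if no such index exists the claim is trivial. Then
\[
\sum_{j \le J} \frac{d}{\tau_j} = \frac{d}{\tau_J}\sum_{j\le J} 2^{-(J-j)} < \frac{d}{\tau_J}\cdot 2 < \frac{2d}{8d} = \frac14 .
\]
The union bound then gives the claim.

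The key step is to observe that the thresholds $\tau_j$ shrink geometrically. The Markov bounds $d/\tau_j$ therefore form a geometric series dominated by its last term, and that term is below $1/8$. No variance information is needed here; only nonnegativity and the expectation from \Clm{X} are used. There is no real obstacle. The only point to state carefully is that the union bound runs over the events of the samples themselves, not conditioned on reaching the iteration, so the dependence on earlier stopping does not matter.
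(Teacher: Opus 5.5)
Your proof is correct and matches the paper's: Markov's inequality in each iteration with $\tau > 8d$, then a union bound in which the geometrically shrinking thresholds make the bounds $d/\tau_j$ sum to less than $2 \cdot d/(8d) = 1/4$. Your remark that the union bound runs over the sample events themselves, rather than conditioning on reaching an iteration, is a small point of rigor that the paper leaves implicit.
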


\begin{proof} By \Clm{X} and linearity of expectation, $\EX[X] = d$. By the Markov inequality,
    for any iteration $\Pr[X > \tau] < d/\tau$. Note that $X > \tau$ is exactly the termination condition. Let $\tau_0<\tau_1< \ldots$ be all the threshold
    values in the running of \ers that are more than $8d$ (in increasing order). Observe that $\tau_0 > 8d$ and $\tau_i = 2^i \tau_0$. By the union bound, the probability that \ers{} terminates when $\tau > 8d$
    \begin{equation}
        \sum_{i \geq 0} \Pr[X > \tau_i] < \sum_{i \geq 0} d/\tau_i = d \sum_{i \geq 0} 1/(2^i \tau_0) \leq (d/8d) \sum_{i \geq 0} 2^{-i} \leq 1/4
    \end{equation}
\end{proof}

Since the procedure will run sufficiently long, it is guaranteed to take enough samples
to converge. We wrap up the proof now.

\begin{proof}[Proof of Theorem~\ref{thm:main}] Note that $s\tau$ is the same in each iteration, and has value exactly $c\alpha/\eps^2$. 
    Consider an iteration where $\tau \leq 8d$. Hence, $s \geq (c/8\eps^2)(\alpha/d)$. By linearity of expectation
    and \Clm{X}, $\EX[X] = d$. By the independence of the $X_i$'s and the variance
    bound of \Clm{X}, $\Var[X] = s^{-2} \sum_{i \leq s} \Var[X_i] \leq 8d\alpha(G)/s \leq 8d\alpha/s$.
    By Chebyshev's inequality, the bound on $s$, and $c$ being large enough,
    \begin{equation}
        \Pr[|X - \EX[X]| > \eps \EX[X]] \leq \Var[X]/(\eps^2 \EX[X]^2) \leq \frac{8d\alpha/s}{\eps^2 d^2} \leq \frac{8\alpha}{\eps^2 d \cdot (c/8\eps^2)(\alpha/d)}
        = 64/c < 1/100
    \end{equation}
    Thus, in any iteration where $\tau \leq 8d$, $X$ is $(1\pm\eps)$-approximation to $d$ with probability at least $99/100$.

    By \Clm{markov}, with probability at least $3/4$, the procedure \ers{} will reach an iteration where
    $\tau \leq 8d$. Condition on this event. By the union bound, the probability that $X \in (1\pm \eps)d$
    for the next four iterations (if they happen), 
    is at least $1-4/100$. After four iterations, $\tau \leq d/2$ and $X \geq (1-\eps)d \geq d/2 \geq \tau$, so
    the procedure will terminate. Taking a union bound over all events, with probability at least $1-1/25-1/4 \geq 2/3$,
    the output will be a $(1\pm\eps)$-approximation to $d$ and \ers{} terminates by the iteration that $\tau \leq d/2$.
    We will condition on this event.

    We now bound the sample complexity. Observe that each sample of $X_i$ requires four queries: one uar vertex query,
    one random neighbor query, and two degree queries. So the total sample (and time) complexity is linear 
    in the total number of $X_i$ samples generated.
    Since $s\tau = (c/\eps^2)\alpha$, the sample complexity of the last iteration is $O(\eps^{-2}\alpha/d)$. The sample
    complexities increase by a factor of 2 in each iteration, so the total sample (and time) complexity is $O(\eps^{-2}\alpha/d)$.
\end{proof}

\subsection{The general case, for unknown arboricity} \label{sec:general}

When the arboricity is unknown (or the case for general graphs), \ers{} as defined will not work.
For this case, we need to assume that $n$ is known. Observe that the input $\alpha$ is only
used to set the threshold $\tau$; naively, one could initialize $\tau$ to $n$.
A direct application of \Thm{main} would lead to a $O(\eps^{-2}n/d)$ complexity.
On the other hand, the ``correct" bound according to \Lem{arb} would be $O(\eps^{-2} \sqrt{m}/d) = O(\eps^{-2} \sqrt{n/d})$. 

A simple tweak to \Step{reset} suffices to get this bound. 
This modified procedure and analysis is also given in Appendix A.2 of Beretta-Chakrabarty-Seshadhri~\cite{BCS26};
we give the procedure and proof for completeness.

\begin{algorithm}
    \caption{{\ersg($G,\eps$)}}\label{alg:ersg}
    \begin{algorithmic}[1]
        \LineComment{Adj. list access to $G$, $n$ is known}
        \LineComment{$\eps < 1/2$, $c$ is large constant}
    \Statex
	\State Initialize $s = c/\eps^2$ and $\tau = n$.
	\While{true}:
		\For{$i$ in $1,\ldots,s$}
        \State Pick uar vertex $u$, and pick uar neighbor $v$ of $u$.
        \State Query their degrees $d_u$ and $d_v$.
        \State If $u \prec v$, set $X_i = 2d_u$.
        \State Else, set $X_i = 0$.
	    \EndFor
    \State If $X = s^{-1}\sum_{i \leq s} X_i > \tau$, output $X$.
        \State Else, reset $s = 2s$ and $\tau = \tau/4$. \Comment{Note the difference from \ers} \label{step:resetg}
	\EndWhile
	\end{algorithmic}
\end{algorithm}

The random variable $X_i$ is the same, and \Clm{X} and \Clm{markov} hold.

\begin{theorem} \label{thm:ersg} The algorithm \ersg{} takes as input adjacency list
access to the graph $G$ (with knowledge of $n$) and an approximation
    parameter $\eps$. With probability $> 2/3$, the query complexity (and running time) is $O(\eps^{-2} \sqrt{n/d})$ 
and the output lies in the range $(1\pm\eps) d$.
\end{theorem}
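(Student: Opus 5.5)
We follow the proof of \Thm{main}, replacing the invariant $s\tau = c\alpha/\eps^2$ by the invariant $s\sqrt{\tau} = c\sqrt{n}/\eps^2$. This holds initially ($s = c/\eps^2$, $\tau = n$), and it is preserved by \Step{resetg}, since $s$ doubles while $\sqrt{\tau}$ halves. Termination cannot happen too early: \Clm{markov} only uses $\EX[X]=d$, Markov's inequality, and the fact that the thresholds above $8d$ form a geometric sequence with ratio at least $2$ (here the ratio is $4$). So, with probability at least $3/4$, \ersg{} does not terminate while $\tau > 8d$.

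Next, we consider an iteration with $\tau \leq 8d$. The invariant gives $s \geq (c/\eps^2)\sqrt{n/(8d)}$. By \Clm{X} and Chebyshev's inequality,
\[
\Pr[|X-d|>\eps d] \leq \frac{8\alpha(G)}{s\eps^2 d}.
\]
We bound $\alpha(G)$ using \Lem{arb}: $\alpha(G) \leq \sqrt{2m} = \sqrt{nd}$. Hence
\[
\frac{\alpha(G)}{d} \leq \sqrt{n/d} \leq \frac{\sqrt{8}\,\eps^2 s}{c},
\]
and the failure probability is at most $8\sqrt{8}/c < 1/100$ for large $c$. This is the key step: the $\sqrt{\tau}$ scaling of $s$ is exactly what matches $\alpha \leq \sqrt{nd}$, since at $\tau \approx d$ the sample size is $\approx \eps^{-2}\sqrt{n/d} \gtrsim \eps^{-2}\alpha(G)/d$.

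Now condition on reaching the first iteration with $\tau \leq 8d$. Because $\tau$ drops by a factor of $4$ per iteration, after at most two more iterations we have $\tau \leq d/2$. A union bound over these at most three iterations shows that, with probability at least $97/100$, each has $X \in (1\pm\eps)d$. In the iteration with $\tau \le d/2$ we get $X \geq (1-\eps)d \geq d/2 \geq \tau$, so the algorithm terminates then (or earlier) with a $(1\pm\eps)$-approximation. One edge case needs care: $X > \tau$ is strict, so we use $\eps<1/2$ to get $X > d/2$ strictly. Combining with the $3/4$ bound from \Clm{markov}, the success probability exceeds $2/3$.

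Finally, the complexity. On the good event, the last iteration has $\tau \geq d/32$, because the previous threshold was above $d/2$ or the iteration is within the range analyzed above. The invariant then gives $s = c\sqrt{n}/(\eps^2\sqrt{\tau}) = O(\eps^{-2}\sqrt{n/d})$. Since $s$ doubles in each iteration, the total number of samples is a geometric sum dominated by the last term. Each sample costs four queries, so the total query complexity (and running time) is $O(\eps^{-2}\sqrt{n/d})$. The only real subtlety is the variance step, which needs \Lem{arb} to replace the unknown $\alpha$; the rest is a direct transcription of the proof of \Thm{main}.
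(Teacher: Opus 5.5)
Your proposal is correct and follows the paper's proof essentially step for step. It uses the invariant $s\sqrt{\tau} = c\sqrt{n}/\eps^2$, applies Chebyshev together with \Lem{arb} ($\alpha(G)\le\sqrt{2m}=\sqrt{nd}$) once $\tau \le 8d$, invokes \Clm{markov} to rule out early termination, and bounds the cost by a geometric sum dominated by the last iteration. Your bookkeeping is slightly tighter than the paper's: you need only three iterations because $\tau$ drops by a factor of $4$, you get strict termination from $\eps<1/2$, and your lower bound on the final $\tau$ is valid (it is in fact $> d/8$).
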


\begin{proof} The proof is almost identical to that of \Thm{main}. We observe that $s\sqrt{\tau}$ is the same in each iteration, and has value exactly $c\sqrt{n}/\eps^2$. 
    Consider an iteration where $\tau \leq 8d$. Hence, $s \geq (c/(\sqrt{8}\eps^2))\sqrt{n/d}$. As before, $\EX[X] = d$. By the independence of the $X_i$'s and the variance
    bound of \Clm{X}, $\Var[X] = s^{-2} \sum_{i \leq s} \Var[X_i] \leq 8d\alpha(G)/s$. By \Lem{arb}, $\alpha(G) \leq \sqrt{2m}$.
    Hence, $\Var[X] \leq 8\sqrt{2} \sqrt{m} d/s \leq 16\sqrt{m} d/s$.
    We use Chebyshev's inequality and $c$ being large enough. (For convenience, $c'$ denotes another constant.)
    \begin{align}
        \Pr[|X - \EX[X]| > \eps \EX[X]] \leq \Var[X]/(\eps^2 \EX[X]^2) & \leq \frac{16\sqrt{m}d/s}{\eps^2 d^2} \leq \frac{16\sqrt{m}}{\eps^2 d \cdot (c/\sqrt{8}\eps^2)\sqrt{n/d}} \\
        & \leq \frac{\sqrt{m}}{c' \sqrt{nd}} < 1/100
    \end{align}
    Thus, in any iteration where $\tau \leq 8d$, $X$ is $(1\pm\eps)$-approximation to $d$ with probability at least $99/100$.

    By \Clm{markov}, with probability at least $3/4$, the procedure \ers{} will reach an iteration where
    $\tau \leq 8d$. Following the same argument as in the proof of \Thm{main}, 
    with probability at least $1-1/25-1/4 \geq 2/3$,
    the output will be a $(1\pm\eps)$-approximation to $d$ and \ersg{} terminates by the iteration that $\tau \leq d/2$.
    We will condition on this event.

    We now bound the sample complexity. The total sample (and time) complexity is linear 
    in the total number of $X_i$ samples generated.
    Since $s\sqrt{\tau} = (c/\eps^2)\sqrt{n}$ and $\tau \geq d/2$ by the time \ersg{} terminates , the sample complexity of the last iteration is $O(\eps^{-2}\sqrt{n/d})$. The sample
    complexities increase by a factor of 2 in each iteration, so the total sample (and time) complexity is $O(\eps^{-2}\sqrt{n/d})$.
\end{proof}

When $n$ is unknown, \ersg{} cannot choose the starting value of $\tau = n$. We can use
uar vertex samples to estimate $n$ using the birthday paradox (using Theorem 2.1 of Ron-Tsur~\cite{RoTs16}). Overall, this would
lead to a $O(\eps^{-2} \sqrt{n})$ query algorithm. Beretta-Chakrabarty-Seshadhri give an $\Omega(\min(\sqrt{n},n/d))$
lower bound in this setting (Claim 5.3 of~\cite{BCS26}), showing that the knowledge of $n$ is required to get the $\sqrt{n/d}$ complexity.

\bibliographystyle{alpha}
\bibliography{sublinear}

\end{document}